\theoremstyle{plain}
\newtheorem{thm}{Theorem}[section]
\newtheorem{cor}[thm]{Corollary}
\newtheorem{lem}[thm]{Lemma}
\theoremstyle{definition}
\newtheorem{Def}[thm]{Definition}
\newtheorem{example}[thm]{Example}
\theoremstyle{remark}
\newtheorem{Rem}{Remark}[thm]
\numberwithin{equation}{section}
\DeclareMathOperator{\ext}{ext}
\DeclareMathOperator{\conv}{conv}
\newcommand{\rcvx}[1]{{#1}^{\mathbb{Q}}}
\newcommand{\pu}{\mathrm{pu}}
\newcommand{\nc}{\mathrm{nc}}
\newcommand{\co}{\mathrm{co}}
\begin{document}


\title{The Payoff Region of a Strategic Game and Its Extreme Points}
\author{Yu-Sung Tu}
\address{Institute of Economics\\
         Academia Sinica\\
         Taipei 11529, Taiwan}
\email[Yu-Sung Tu]{yusungtu@gate.sinica.edu.tw}
\author{Wei-Torng Juang}
\email[Wei-Torng Juang]{wjuang@econ.sinica.edu.tw}

\keywords{Noncooperative payoff region, non-strictly convex subregion, extreme point, supporting hyperplane}

\begin{abstract}
The range of a payoff function for an $n$-player finite strategic game is investigated using a novel approach, the notion of extreme points of a non-convex set.
The shape of a noncooperative payoff region can be estimated using extreme points and supporting hyperplanes of the cooperative payoff region.
A basic structural characteristic of a noncooperative payoff region is that any of its subregions must be non-strictly convex if the subregion contains a relative neighborhood of a point on its boundary.
Besides, applying the properties of extreme points of a noncooperative payoff region is a simple and effective way to prove some results about Pareto efficiency and social efficiency in game theory.
\end{abstract}

\maketitle


\section{Introduction}


In this paper, we attempt to explore the fundamental properties of noncooperative payoff regions and their potential applications. A payoff region of an $n$-player finite strategic game can occur under different hypotheses about the players' strategic behavior. The cooperative payoff region achievable under correlated strategies is the convex hull of the payoff vectors generated by pure-strategy profiles; the noncooperative payoff region achievable under mixed-strategy profiles is a subset of the convex polytope just described.
Compared to the cooperative payoff region, the shape of the noncooperative payoff region is usually too complex to be characterized simply.

It is very common that a noncooperative payoff region of a two-player finite strategic game could look like a sharp boomerang,
but it would never look like a ice cream cone (see, e.g., \citet{kBin:pr} and \citet{eBar:gt}).
Here this basic characteristic will be proved in a mathematically rigorous manner. We introduce the concept of extreme points of a non-convex set, and apply it to a noncooperative payoff region. The key theorem is that for an $n$-player finite strategic game, any extreme point of the cooperative payoff region is an extreme point of the noncooperative payoff region, and \emph{all} these extreme points can be achieved as the payoff profiles on which the players choose pure strategies. This allows us to deduce directly that any subset of an $n$-dimensional noncooperative payoff region cannot be \emph{strictly convex} if it contains a relative neighborhood of a boundary point of the payoff region.

On the other hand, it is easy to see that the cooperative and noncooperative payoff regions generated from an $n$-player finite strategic game have the same supporting hyperplanes. Moreover, every supporting hyperplane to a noncooperative payoff region in $\mathbb{R}^n$ must contain at least one extreme point of this payoff region, no matter what the shape of this patoff region is. These results will be presented in Section~\ref{sec:PayReg}.

An extreme point of a noncooperative payoff region can be achieved through the play of a pure-strategy profile, but not vice versa.
We naturally compare it to the payoff profile which cannot be achieved with a non-degenerate mixed-strategy profile. In Section~\ref{sec:examples}, we give some examples to show that in fact there is no necessary relationship between them. Finally, we present in Section~\ref{sec:App} some applications regarding Pareto efficiency and social efficiency in the theory of games. These will demonstrate that the approach using the properties of extreme points of noncooperative payoff regions is a simple and effective method of proving theorems.


\section{Preliminaries and Basic Properties}
\label{sec:Basic}


We shall consider finite strategic games. Let $N = \{1, \dots, n\}$ be the set of players.
For $i\in N$, the nonempty finite set $A_i$ is the set of \emph{pure strategies} available to player~$i$.
The set of pure-strategy profiles is the Cartesian product $\prod_{i\in N}A_i$ of the players' pure-strategy sets.
A \emph{mixed strategy} $\sigma_i$ of player~$i$ is a probability distribution over $A_i$, and let $\Delta(A_i)$ denote the set of mixed strategies of player~$i$.
The Cartesian product $\prod_{i\in N}\Delta(A_i)$ is the set of all mixed-strategy profiles.
We define a \emph{correlated strategy} $\varphi$ for $n$ matched players to be a probability distribution over $\prod_{i\in N}A_i$, and denote by $\Delta(\prod_{i\in N}A_i)$ the set of correlated strategies. Then every mixed-strategy profile $\sigma\in \prod_{i\in N}\Delta(A_i)$ can induce a correlated strategy $\varphi_{\sigma}$ in the way: $\varphi_{\sigma}(a) = \prod_{i\in N} \sigma_i(a_i)$ for every $a\in \prod_{i\in N}A_i$, where $\sigma_i(a_i)$ is the probability assigned by $\sigma_i$ to $a_i$. We call such a correlated strategy the \emph{induced correlated strategy of} $\sigma$.

For $i\in N$, let $u_i\colon \prod_{i\in N}A_i\to \mathbb{R}$ be the payoff function of player~$i$. Every payoff function $u_i$ can be extended to the set $\Delta(\prod_{i\in N}A_i)$ by taking the expected values over $\prod_{i\in N}A_i$. We can also extend its domain to the set $\prod_{i\in N}\Delta(A_i)$ in such a way that the payoff value of $u_i$ at the mixed-strategy profile $\sigma\in \prod_{i\in N}\Delta(A_i)$ is
\begin{equation}\label{eq:20170303}
u_i(\sigma) = u_i(\varphi_{\sigma}) = \sum_{a\in \prod_{i\in N}A_i} \varphi_{\sigma}(a) u_i(a),
\end{equation}
where $\varphi_{\sigma}$ is the induced correlated strategy of $\sigma$.
\footnote{The following is an equivalent definition for $u_i$ extended to the set of mixed-strategy profiles:
for any $\sigma\in \prod_{i\in N}\Delta(A_i)$, let
\[
u_i(\sigma) = \sum_{a_1\in A_1}\dots \sum_{a_n\in A_n} \sigma_1(a_1) \cdots \sigma_n(a_n) u_i(a_1, \dots, a_n).
\]
}
Define the vector-valued payoff function $u\colon \prod_{i\in N}A_i\to \mathbb{R}^n$ by $u(a) = (u_1(a),\dots,u_n(a))$. Then $u$ can be extended to the sets $\Delta(\prod_{i\in N}A_i)$ and $\prod_{i\in N}\Delta(A_i)$ through the $u_i$.

For $i\in N$ and $a_i\in A_i$, let $\delta_{a_i}$ denote the degenerate probability (Dirac measure) concentrated at $a_i$. It is clear that a pure-strategy profile $a = (a_1, \dots, a_n)$ will correspond to its mixed-strategy profile $\sigma_a = (\delta_{a_1}, \dots, \delta_{a_n})$, and we have
\[
u(\sigma_a) = u(a).
\]
So we can embed the set of pure-strategy profiles into the set of mixed-strategy profiles.
In fact, such a relationship also exists between the set of mixed-strategy profiles and the set of correlated strategies.
The following lemma shows that distinct mixed-strategy profiles correspond to distinct correlated strategies.

\begin{lem}
In a finite strategic game, the set of all mixed-strategy profiles and the set of all induced correlated strategies are in one-to-one correspondence.
\end{lem}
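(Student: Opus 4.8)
The plan is to show that the assignment $\sigma \mapsto \varphi_\sigma$ is a bijection from $\prod_{i\in N}\Delta(A_i)$ onto the set of induced correlated strategies. Surjectivity is immediate from the definition, since the set of induced correlated strategies is by construction precisely the image of this map; thus the entire content of the lemma is the \emph{injectivity} of $\sigma \mapsto \varphi_\sigma$, that is, the assertion that distinct mixed-strategy profiles produce distinct product distributions over $\prod_{i\in N}A_i$.

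First I would record that the map is well defined, noting that each $\varphi_\sigma$ is genuinely a probability distribution on $\prod_{i\in N}A_i$: since every $\sigma_i$ satisfies $\sum_{a_i\in A_i}\sigma_i(a_i)=1$, factoring the total mass gives $\sum_{a\in\prod_{i\in N}A_i}\varphi_\sigma(a)=\prod_{i\in N}\bigl(\sum_{a_i\in A_i}\sigma_i(a_i)\bigr)=1$.

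The crux is to recover the factors $\sigma_i$ from the product $\varphi_\sigma$. For a fixed player $i\in N$ and a fixed pure strategy $a_i\in A_i$, I would compute the $i$-th marginal of $\varphi_\sigma$ by summing over the strategies of all other players. Writing $a=(a_i,a_{-i})$, the factorization $\varphi_\sigma(a)=\sigma_i(a_i)\prod_{j\neq i}\sigma_j(a_j)$ separates the $i$-th factor from the rest, so that summing over $a_{-i}$ yields $\sigma_i(a_i)\prod_{j\neq i}\bigl(\sum_{a_j\in A_j}\sigma_j(a_j)\bigr)=\sigma_i(a_i)$, again using that each $\sigma_j$ is a probability distribution. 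Hence the $i$-th marginal of $\varphi_\sigma$ equals $\sigma_i$, so $\sigma$ is uniquely determined by $\varphi_\sigma$. This establishes injectivity and completes the one-to-one correspondence.

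I do not expect a genuine obstacle here; the lemma is elementary, and the only care required is the bookkeeping of the marginalization together with the repeated use of the normalization $\sum_{a_i\in A_i}\sigma_i(a_i)=1$. If anything, the subtle point is to be explicit that surjectivity is definitional, so that one does not mistakenly attempt to prove a statement about the full set $\Delta(\prod_{i\in N}A_i)$ of correlated strategies, which is in general strictly larger than the set of induced correlated strategies.
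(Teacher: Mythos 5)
Your proof is correct. Both you and the paper reduce the lemma to injectivity of $\sigma\mapsto\varphi_\sigma$ and both exploit the normalization $\sum_{a_i\in A_i}\sigma_i(a_i)=1$, but the execution differs. The paper eliminates players one at a time: starting from $\prod_{i\in N}p_i^{r_i}=\prod_{i\in N}q_i^{r_i}$ it sums over the index of a single player $j$ to conclude $\prod_{i\in N\setminus\{j\}}p_i^{r_i}=\prod_{i\in N\setminus\{j\}}q_i^{r_i}$, then repeats with a second player $k$, and so on down to the singleton products, with the intermediate stages asserted by an informal ``repeat this process.'' You instead marginalize out all players other than $i$ in a single step, observing that the $i$-th marginal of $\varphi_\sigma$ is exactly $\sigma_i$; this recovers each factor directly and avoids the induction over subsets entirely. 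Your route is shorter, needs no ``pattern is clear'' appeal, and makes the conceptual content transparent (the mixed strategies are the marginals of the induced product distribution); the paper's iterative version proves slightly more along the way, namely that all the partial products $\prod_{i\in T}p_i^{r_i}$ over arbitrary coalitions $T\subseteq N$ coincide, though nothing in the paper uses that extra information. Your closing remark that surjectivity is definitional, and that no claim is being made about all of $\Delta(\prod_{i\in N}A_i)$, is a point the paper also makes (in one sentence) and is worth keeping explicit.
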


\begin{proof}
Let $|A_i| = m_i$ for each $i\in N$. Define $f\colon \prod_{i\in N}\Delta(A_i)\to \Delta(\prod_{i\in N}A_i)$ by $f(p) = \varphi_p$, where $p = (p_1, \dots, p_n)$, $p_i = (p_i^1, \dots, p_i^{m_i})$ for all $i\in N$, and $\varphi_p$ is the induced correlated strategy of $p$. It is clear that the range of $f$ is the set of all induced correlated strategies. To see that $f$ is injective, suppose that $f(p) = f(q)$. Then $\prod_{i\in N}p_i^{r_i} = \prod_{i\in N}q_i^{r_i}$ for any $(r_i)_{i\in N}$ with $1\leq r_i\leq m_i$. The fact that
\[
\sum_{r_j=1}^{m_j}p_j^{r_j}\prod_{i\in N\setminus \{j\}}p_i^{r_i} = \sum_{r_j=1}^{m_j}q_j^{r_j}\prod_{i\in N\setminus \{j\}}q_i^{r_i}
\]
for each $j\in N$ and for each $(r_i)_{i\in N\setminus \{j\}}$ with $1\leq r_i\leq m_i$ implies that the equality $\prod_{i\in N\setminus \{j\}}p_i^{r_i} = \prod_{i\in N\setminus \{j\}}q_i^{r_i}$ holds for any $j\in N$ and for any $(r_i)_{i\in N\setminus \{j\}}$ with $1\leq r_i\leq m_i$.

Again, we use the fact that
\[
\sum_{r_k=1}^{m_k}p_k^{r_k}\prod_{i\in N\setminus \{j, k\}}p_i^{r_i} = \sum_{r_k=1}^{m_k}q_k^{r_k}\prod_{i\in N\setminus \{j, k\}}q_i^{r_i}
\]
for each $j$, $k\in N$ and for each $(r_i)_{i\in N\setminus \{j, k\}}$ with $1\leq r_i\leq m_i$. This fact implies the equality $\prod_{i\in N\setminus \{j, k\}}p_i^{r_i} = \prod_{i\in N\setminus \{j, k\}}q_i^{r_i}$ for any $j$, $k\in N$ and for any $(r_i)_{i\in N\setminus \{j, k\}}$ with $1\leq r_i\leq m_i$. Repeat this process. The pattern is clear, and eventually we reach the conclusion that $p_i^{r_i} = q_i^{r_i}$ for each $i\in N$ and for each $r_i\in \{1, \dots, m_i\}$.
\end{proof}

Therefore, applying the payoff relation~\eqref{eq:20170303}, we can embed the set of mixed-strategy profiles into the set of correlated strategies.

\begin{Rem}\label{rem:20160418}
In every finite strategic game, the set $\prod_{i\in N}A_i$ can be considered as embedded in the set $\prod_{i\in N}\Delta(A_i)$; the set $\prod_{i\in N}\Delta(A_i)$ can be considered as embedded in the set $\Delta(\prod_{i\in N}A_i)$.
\end{Rem}


\section{Noncooperative Payoff Regions}
\label{sec:PayReg}


A payoff region of a game can occur under different hypotheses about what players will be able to do. It is well known that the payoff region may be very far from convex when the players choose their strategies independently. It would be helpful if we could obtain further information about the shape of a noncooperative payoff region.

In this section, the concept of extreme points of a non-convex set is introduced, and we apply it to a noncooperative payoff region, together with supporting hyperplanes.
This approach can give us valuable insight into understanding the general shape of a non-convex payoff region.

\begin{Def}
Let $(N, (A_i)_{i\in N}, (u_i)_{i\in N})$ be a finite strategic game. The three ranges $u(\prod_{i\in N}A_i)$, $u(\prod_{i\in N}\Delta(A_i))$, and $u(\Delta(\prod_{i\in N}A_i))$ are denoted respectively by $S_{\pu}$, $S_{\nc}$, and $S_{\co}$, and are called respectively the \emph{pure-payoff region}, the \emph{noncooperative payoff region}, and the \emph{cooperative payoff region}.
\end{Def}

From Section~\ref{sec:Basic}, we know that $S_{\pu}\subseteq S_{\nc}\subseteq S_{\co}$. The pure-payoff region $S_{\pu}$ is a finite subset of $\mathbb{R}^n$, and the convex hull of $S_{\pu}$, denoted by $\conv(S_{\pu})$, is just the cooperative payoff region $S_{\co}$, a convex polytope in $\mathbb{R}^n$. The noncooperative payoff region $S_{\nc}$ is a colsed subset of $S_{\co}$, and it is also a set of generators of $S_{\co}$, that is,
\begin{equation}\label{eq:20170309}
S_{\co} = \conv(S_{\pu}) = \conv(S_{\nc}).
\end{equation}
In this paper, we shall provide more of the basic features of a noncooperative payoff region.

First, we show that the noncooperative payoff region of an $n$-player finite strategic game is a closed, bounded, and connected subset of $\mathbb{R}^n$.

\begin{lem}\label{prop:20160420}
Let $(N, (A_i)_{i\in N}, (u_i)_{i\in N})$ be a finite strategic game. Then the noncooperative payoff region $S_{\nc}$ is path-connected and compact.
\end{lem}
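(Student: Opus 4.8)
The plan is to exhibit $S_{\nc}$ as the continuous image of a compact, convex set and then invoke two standard facts: the continuous image of a compact set is compact, and the continuous image of a path-connected set is path-connected. The relevant domain is the set of mixed-strategy profiles $\prod_{i\in N}\Delta(A_i)$, and the relevant map is the extended payoff function $u$, so that $S_{\nc} = u(\prod_{i\in N}\Delta(A_i))$ by definition.

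First I would check that $\prod_{i\in N}\Delta(A_i)$ is compact and convex. Writing $m_i = |A_i|$, each $\Delta(A_i)$ is identified with the standard simplex $\{(p_i^1,\dots,p_i^{m_i}) : p_i^j\geq 0,\ \sum_{j}p_i^j = 1\}$ in $\mathbb{R}^{m_i}$, which is closed and bounded, hence compact by the Heine--Borel theorem, and is plainly convex. A finite product of compact convex sets is again compact and convex, so $\prod_{i\in N}\Delta(A_i)$ enjoys both properties; in particular it is path-connected, since any two points of a convex set are joined by the straight-line segment between them.

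Next I would establish the continuity of $u$. By the footnote formula accompanying \eqref{eq:20170303}, each component $u_i(\sigma) = \sum_{a_1\in A_1}\dots\sum_{a_n\in A_n}\sigma_1(a_1)\cdots\sigma_n(a_n)\,u_i(a_1,\dots,a_n)$ is a polynomial in the coordinates of $\sigma = (\sigma_1,\dots,\sigma_n)$, hence continuous on $\prod_{i\in N}\Delta(A_i)$; therefore the vector-valued map $u = (u_1,\dots,u_n)$ is continuous.

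With these ingredients the conclusion is immediate: $S_{\nc}$ is the image of a compact set under a continuous map, hence compact, and the image of a path-connected set under a continuous map, hence path-connected. I do not expect a genuine obstacle here; the only point requiring a little care is the routine identification of each $\Delta(A_i)$ with a Euclidean simplex, so that compactness and continuity can be cited in their familiar finite-dimensional forms.
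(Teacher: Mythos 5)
Your proposal is correct and follows essentially the same route as the paper: identify each $\Delta(A_i)$ with a Euclidean simplex, observe that the product is compact and path-connected (the paper states this directly; you justify path-connectedness via convexity), note that the extended payoff map is a polynomial and hence continuous, and conclude that $S_{\nc}$ is the continuous image of a compact path-connected set. There is no gap.
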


\begin{proof}
Suppose that $A_i = \{a_i^1, \dots, a_i^{m_i}\}$ for every $i\in N$. Then the set $\Delta(A_i)$ can be identified with the standard $(m_i - 1)$-simplex $\Delta^{m_i - 1}$. For each $j\in N$, define $U_j\colon \prod_{i\in N} \mathbb{R}^{m_i}\to \mathbb{R}$ by
\[
U_j(x_1, \dots, x_n) = \sum_{(a_1^{r_1}, \dots, a_n^{r_n})\in \prod_{i\in N} A_i} x_1^{r_1} \cdots x_n^{r_n} u_j(a_1^{r_1}, \dots, a_n^{r_n}),
\]
where $x_i = (x_i^1, \dots, x_i^{m_i})\in \mathbb{R}^{m_i}$ for every $i\in N$. Define the vector-valued function $U\colon \prod_{i\in N} \mathbb{R}^{m_i}\to \mathbb{R}^n$ by
\[
U(x_1, \dots, x_n) = (U_1(x_1, \dots, x_n), \dots, U_n(x_1, \dots, x_n))
\]
for all $(x_1, \dots, x_n)\in \prod_{i\in N} \mathbb{R}^{m_i}$. Then $U$ is a continuous function with respect to the Euclidean metric topologies on $\prod_{i\in N} \mathbb{R}^{m_i}$ and $\mathbb{R}^n$. Obviously, the set $\prod_{i\in N}\Delta^{m_i - 1}$ is a path-connected, compact subset of $\prod_{i\in N} \mathbb{R}^{m_i}$. We see that the noncooperative payoff region $S_{\nc}$ is path-connected and compact, since this region is the image of $\prod_{i\in N}\Delta^{m_i - 1}$ under the continuous function $U$.
\end{proof}

We now propose a simple approach to describe the shape of a noncooperative payoff region by its extreme points and supporting hyperplanes.

\subsection*{Extreme Points}

Extreme points of convex sets play an important role in convex analysis, but here we extend the concept of extreme points to non-convex sets. This simple extension turns out to be very useful in some applications such as describing the shape of a noncooperative payoff region,
and it provides an efficient approach for analysis of Pareto efficiency and social efficiency in game theory; see Section~\ref{sec:App}.

Under the notion of extreme points of a convex set, any corner point of a closed convex polygon is an extreme point. When a non-convex set is considered, a minor modification is required in order to retain this property, as described below.

\begin{Def}\label{def:20170307}
Let $S$ be a nonempty subset of $\mathbb{R}^n$.
\footnote{We do not assume the subset $S$ of $\mathbb{R}^n$ to be convex in this definition.}
A point $x\in S$ is called an \emph{extreme point} of $S$ if
\begin{equation}\label{eq:20170306}
x\in \{\, \theta y + (1-\theta)z \mid \theta\in (0, 1) \,\}\subset S
\end{equation}
with $y$, $z\in S$ implies that $x = y = z$. Let $\ext(S)$ denote the set of all extreme points of $S$.
\end{Def}
This means that an extreme point of $S$ is a point in $S$ which is not an interior point of any line segment contained in $S$. Because the subset $S$ in Definition~\ref{def:20170307} may or may not be convex, it is not sufficient to check whether the point $x$ can lie in the interior of the line segment joining two distinct points of $S$; it must also be checked that the line segment is contained in $S$, that is, the condition $\{\, \theta y + (1-\theta)z \mid \theta\in (0, 1) \,\}\subset S$ in~\eqref{eq:20170306} is required.
\footnote{As we will see in Example~\ref{ex:20170310}, the corner point $(2, 2)$ of the non-convex polygon is, by Definition~\ref{def:20170307}, an extreme point (see Figure~\ref{fig:1116}), although the point $(2, 2)$ can be written as a convex combination of two distinct points in the polygon.}
A closed set may have no extreme points (e.g., a closed half-plane), a finite number of extreme points (e.g., a closed polygon), or an infinite number of extreme points (e.g., a closed disk).

\begin{Rem}\label{rem:20160128}
An interior point of a subset $S$ of $\mathbb{R}^n$ cannot be an extreme point of $S$. In other words, all extreme points must be boundary points.
\end{Rem}

The next lemma is the fundamental result concerning the existence of extreme points of a nonempty compact set; see \citet[p.~21]{tIch:gtea}.

\begin{lem}\label{prop:20160424}
Every nonempty compact subset of $\mathbb{R}^n$ has extreme points.
\end{lem}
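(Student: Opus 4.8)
The plan is to produce an extreme point explicitly by maximizing a strictly convex function over $S$. Take $f\colon \mathbb{R}^n \to \mathbb{R}$ given by $f(x) = \|x\|^2$, the squared Euclidean norm. Since $f$ is continuous and $S$ is nonempty and compact, $f$ attains a maximum on $S$ at some point $x_0 \in S$. I claim that this maximizer $x_0$ is an extreme point of $S$ in the sense of Definition~\ref{def:20170307}.

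To verify the claim, I would suppose that $x_0$ belongs to an open segment $\{\, \theta y + (1-\theta)z \mid \theta\in (0, 1) \,\}\subset S$ with $y$, $z\in S$, say $x_0 = \theta_0 y + (1-\theta_0) z$ for some $\theta_0 \in (0,1)$. The function $f$ is strictly convex, since its Hessian $2I$ is positive definite. Hence, if $y \neq z$, then
\[
f(x_0) = f\bigl(\theta_0 y + (1-\theta_0) z\bigr) < \theta_0 f(y) + (1-\theta_0) f(z) \leq \theta_0 f(x_0) + (1-\theta_0) f(x_0) = f(x_0),
\]
where the last inequality uses the maximality of $f(x_0)$ over $S$. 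This is absurd, so $y = z$, and therefore $x_0 = y = z$. Thus $x_0$ satisfies the defining condition of an extreme point.

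The one point that deserves care is that the notion of extreme point used here is the extended one of Definition~\ref{def:20170307}, applied to a set $S$ that need not be convex. This is precisely the reason I would maximize the strictly convex $f$ rather than a linear functional: maximizing a linear functional only locates a supporting hyperplane together with its contact set, from which one would still have to descend by induction on dimension to reach an extreme point, and for non-convex $S$ that contact set need not even be convex. Strict convexity of $\|\cdot\|^2$, by contrast, forces the maximizer to be a genuine extreme point in a single step, because no maximizer of a strictly convex function can lie in the relative interior of a segment with distinct endpoints. For this reason I do not anticipate a serious obstacle; the argument is short once the strictly convex objective is chosen, and it uses nothing about $S$ beyond nonemptiness and compactness.
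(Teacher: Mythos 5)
Your proof is correct. Note that the paper does not actually prove this lemma; it simply cites \citet[p.~21]{tIch:gtea}, so there is no in-paper argument to compare against. Your farthest-point argument is a clean, self-contained justification: maximizing the strictly convex function $\|x\|^2$ over the nonempty compact set $S$ yields a point $x_0$ that cannot lie in the open segment between two distinct points $y,z\in S$, since strict convexity together with $f(y)\leq f(x_0)$ and $f(z)\leq f(x_0)$ would force $f(x_0)<f(x_0)$. In fact your argument proves something slightly stronger than Definition~\ref{def:20170307} requires: you never need the hypothesis that the open segment is \emph{contained} in $S$, only that its endpoints lie in $S$, so $x_0$ is extreme even in the more demanding convex-set sense. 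Your remark about why a linear functional would not suffice in one step is also apt, and the whole argument uses only nonemptiness and compactness, as claimed.
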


A basic property of convex sets in $\mathbb{R}^n$ states that any compact convex subset $K\subset \mathbb{R}^n$ is the convex hull of its extreme points, that is, $K = \conv( \ext(K) )$.
In addition, for every compact subset $S\subset K$ such that $\conv(S) = K$, we have $\ext(K)\subseteq S$.
These are obvious consequences of the Krein--Milman theorem and Milman's theorem in a finite-dimensional case (see, e.g., \citet[pp.~75--76]{wRud:fa}).
Thus, for a given finite strategic game, the equations in~\eqref{eq:20170309} lead to the relations:
\[
S_{\co} = \conv( \ext(S_{\co}) )
\quad \text{and} \quad
\ext(S_{\co})\subseteq S_{\pu}\subseteq S_{\nc}.
\]
The primary purpose of this paper is to explore the relevant properties of a noncooperative payoff region.
The following theorem states the relations between the extreme points of $S_{\nc}$ and other payoff points, which will play a central role in numerous applications.

\begin{thm}\label{prop:20151111}
Let $(N, (A_i)_{i\in N}, (u_i)_{i\in N})$ be a finite strategic game. Then
\[
\varnothing\neq \ext(S_{\co})\subseteq \ext(S_{\nc})\subseteq S_{\pu}.
\]
\end{thm}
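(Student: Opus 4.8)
The plan is to prove the two-step chain by treating each inclusion separately: nonemptiness of $\ext(S_{\co})$, the inclusion $\ext(S_{\co})\subseteq \ext(S_{\nc})$, and the inclusion $\ext(S_{\nc})\subseteq S_{\pu}$. The first two are short. Nonemptiness is immediate, since $S_{\co} = \conv(S_{\pu})$ is a nonempty convex polytope, hence a nonempty compact subset of $\mathbb{R}^n$, so Lemma~\ref{prop:20160424} supplies an extreme point. For $\ext(S_{\co})\subseteq \ext(S_{\nc})$, I would take $x\in \ext(S_{\co})$; by the consequence of Milman's theorem recorded just before the statement, $x\in \ext(S_{\co})\subseteq S_{\pu}\subseteq S_{\nc}$, so $x\in S_{\nc}$. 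To see that $x$ stays extreme in the smaller set, suppose $x\in \{\,\theta y + (1-\theta)z \mid \theta\in (0,1)\,\}\subset S_{\nc}$ with $y,z\in S_{\nc}$. Since $S_{\nc}\subseteq S_{\co}$, the same open segment and its endpoints lie in $S_{\co}$, and extremality of $x$ in the convex set $S_{\co}$ forces $x=y=z$. The point is simply that extremality can only improve upon passing to a subset, because every admissible segment decomposition in $S_{\nc}$ is a fortiori one in $S_{\co}$.

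The real content is $\ext(S_{\nc})\subseteq S_{\pu}$, which I would prove by a coordinate-by-coordinate purification. Fix $x\in \ext(S_{\nc})$ and write $x = u(\sigma)$ for some mixed profile $\sigma$; the goal is to replace $\sigma$ by a pure profile with the same payoff. The engine is that, with $\sigma_{-j}$ held fixed, the map $\tau\mapsto u(\tau,\sigma_{-j})$ from $\Delta(A_j)$ to $\mathbb{R}^n$ is affine (this is precisely the multilinearity in the footnote to~\eqref{eq:20170303}), and its image lies in $S_{\nc}$ because $(\tau,\sigma_{-j})$ is a legitimate mixed profile for every $\tau$; consequently an affine map carries any segment in $\Delta(A_j)$ onto the segment between its image endpoints, and that image segment lies in $S_{\nc}$. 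Now if $\sigma_j$ is non-degenerate, it lies in the relative interior of the face $\Delta(\operatorname{supp}\sigma_j)$, so for each pure strategy $b$ in its support one can write $\sigma_j = \theta\,\delta_b + (1-\theta)w$ with $\theta\in (0,1)$ and $w\in \Delta(\operatorname{supp}\sigma_j)$. This produces an open segment in $S_{\nc}$, with endpoints $u(\delta_b,\sigma_{-j})$ and $u(w,\sigma_{-j})$, passing through $x = u(\sigma)$ in its interior, so Definition~\ref{def:20170307} forces $u(\delta_b,\sigma_{-j}) = x$. Thus player $j$'s mixed strategy may be swapped for the pure strategy $b$ without changing the payoff. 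Applying this to each player in turn — every swap leaves the payoff equal to the same extreme point $x$, so the hypothesis is available again at the next coordinate — I would purify all $n$ players and reach a pure profile $a$ with $u(a)=x$, whence $x\in S_{\pu}$.

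The step I expect to demand the most care is the single-coordinate purification. One must verify that a non-degenerate $\sigma_j$ genuinely sits in the \emph{interior} of a segment whose endpoints remain in the support face, so that the parameter is honestly in $(0,1)$ and the image is an admissible segment in $S_{\nc}$ to which the extreme-point definition applies; and one must confirm that, because the map is affine, the image of that segment is exactly the straight segment between the image endpoints, so that no degenerate configuration can evade the condition in~\eqref{eq:20170306}. Once this local step is set up correctly, the iteration over players is routine, since the target point $x$ is fixed throughout and each stage only purifies one more coordinate.
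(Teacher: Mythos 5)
Your proposal is correct and follows essentially the same route as the paper: nonemptiness via compactness, the subset inclusion for extreme points via convexity of $S_{\co}$, and a player-by-player purification that peels one pure strategy off a mixed strategy, uses multilinearity to exhibit a segment through $u(\sigma)$ inside $S_{\nc}$, and invokes Definition~\ref{def:20170307} to collapse it. Your version is in fact slightly more careful than the paper's at the purification step, since restricting to the support face avoids the degenerate weights $\alpha_j^1\in\{0,1\}$ that the paper's displayed decomposition silently assumes away.
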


\begin{proof}
By Lemma~\ref{prop:20160424}, we have $\ext(S_{\co})\neq \varnothing$. To show that $\ext(S_{\co})\subseteq \ext(S_{\nc})$, let $v\in \ext(S_{\co})$.
Then $v\in S_{\nc}$ because it is known that $\ext(S_{\co})\subset S_{\nc}$. If $v\notin \ext(S_{\nc})$, then there exist distinct points $x$, $y\in S_{\nc}$ such that $v = \lambda x + (1 - \lambda)y$ for some $\lambda\in (0,1)$. Since the convex set $S_{\co}$ contains $S_{\nc}$, we obtain that $v\notin \ext(S_{\co})$, a contradiction.

Next we show that $\ext(S_{\nc})\subseteq S_{\pu}$. Let $v\in \ext(S_{\nc})$ and suppose that $v = u(\sigma)$ for some mixed-strategy profile $\sigma\in \prod_{i\in N}\Delta(A_i)$. Assuming $A_i = (a_i^1, \dots, a_i^{m_i})$ for each $i\in N$, the following decomposition holds: for some $j\in N$,
\[
v = \alpha_j^1 u(a_j^1, \sigma_{-j}) + (1 - \alpha_j^1)\left( \sum_{t=2}^{m_j} \frac{\alpha_j^t}{1 - \alpha_j^1} u(a_j^t, \sigma_{-j}) \right),
\]
where $\alpha_j^t = \sigma_j(a_j^t)$ for $t = 1, \dots, m_j$. Since $v\in \ext(S_{\nc})$ and the closed line segment joining $u(a_j^1, \sigma_{-j})$ and $\sum_{t=2}^{m_j} \frac{\alpha_j^t}{1 - \alpha_j^1} u(a_j^t, \sigma_{-j})$ is contained in $S_{\nc}$, we have
\[
v = u(a_j^1, \sigma_{-j}) = \sum_{t=2}^{m_j} \frac{\alpha_j^t}{1 - \alpha_j^1} u(a_j^t, \sigma_{-j}).
\]
If $(a_j^1, \sigma_{-j})\in \prod_{i\in N}A_i$, then $v\in S_{\pu}$ and we are done. Otherwise, we proceed by decomposing the mixed-strategy profile $(a_j^1, \sigma_{-j})$ as above. Continue this process. Since the set $N$ is finite, eventually we will reach the conclusion that $v = u(a)$ for some $a\in \prod_{i\in N}A_i$.
\end{proof}

It is well known that the case of a two-dimensional noncooperative payoff region with a curved boundary is very common, as we will see in Section~\ref{sec:examples}. Some curved part of the boundary would look like a parabola, which is the envelope of a family of line segments generated by players' mixed strategies. This indicates that the noncooperative payoff region is always outside such a parabola. Thus any subregion containing a relative neighborhood of a boundary point of the noncooperative payoff region cannot be strictly convex.
This assertion can be generalized to higher dimensions, and it can be easily proved using Theorem~\ref{prop:20151111}, as stated below.

\begin{Rem}\label{rem:20160430}
First note that in $\mathbb{R}^n$ boundary points and extreme points coincide if a set is strictly convex. For an $n$-player finite strategic game, Theorem~\ref{prop:20151111} points out that the noncooperative payoff region $S_{\nc}$ has only a finite number of extreme points. So any subregion must be non-strictly convex, provided that the subregion contains a relative neighborhood of a boundary point of $S_{\nc}$.
\end{Rem}

\bigskip

Every closed convex set in $\mathbb{R}^n$ can be represented by closed half-spaces.
Accordingly, for any $n$-player finite strategic game, the cooperative payoff region $S_{\co}$ is the intersection of all closed half-spaces containing it.
Similarly,
the convex hull of the noncooperative payoff region, $\conv(S_{\nc})$,
can be expressed as the intersection of all closed half-spaces containing $S_{\nc}$ (see, e.g., \citet[p.~99]{tRoc:ca}).
The fact that $S_{\co}$ equals $\conv(S_{\nc})$ gives us a motivation to characterize $S_{\nc}$ in terms of the supporting hyperplanes to $S_{\co}$.

\subsection*{Supporting Hyperplanes}

For any given $c\in \mathbb{R}^n\setminus \{ 0 \}$ and $\alpha\in \mathbb{R}$, the set
\[
H = \{\, x\in \mathbb{R}^n \mid c\cdot x = \alpha \,\}
\]
is called a \emph{hyperplane} in $\mathbb{R}^n$. The sets
\[
H^{-} = \{\, x\in \mathbb{R}^n \mid c\cdot x\leq \alpha \,\}
\quad \text{and} \quad
H^{+} = \{\, x\in \mathbb{R}^n \mid c\cdot x\geq \alpha \,\}
\]
are called the \emph{closed half-spaces} determined by $H$.

\begin{Def}
Let $S$ be a nonempty closed subset of $\mathbb{R}^n$. A hyperplane $H$ is said to be a \emph{supporting hyperplane} to $S$ if $S\cap H\neq \varnothing$ and $S$ is contained in one of the two closed half-spaces determined by $H$.
\end{Def}

\begin{lem}\label{prop:20160421}
Let $S_1$ and $S_2$ be nonempty closed subsets of $\mathbb{R}^n$ satisfying $\conv(S_1) = \conv(S_2)$. Then the sets $S_1$ and $S_2$ have the same supporting hyperplanes.
\end{lem}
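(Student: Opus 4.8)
The plan is to exploit the elementary fact that a supporting hyperplane is completely determined by the extremal behaviour of a linear functional, and that this behaviour depends only on the convex hull of the set in question. Since the hypothesis $\conv(S_1)=\conv(S_2)$ is symmetric in $S_1$ and $S_2$, it suffices to prove that every supporting hyperplane to $S_1$ is also a supporting hyperplane to $S_2$; the reverse implication then follows verbatim by interchanging the roles of the two sets. So I would fix a supporting hyperplane $H=\{\,x\in\mathbb{R}^n\mid c\cdot x=\alpha\,\}$ to $S_1$, and after possibly replacing $(c,\alpha)$ by $(-c,-\alpha)$, which leaves $H$ itself unchanged, assume without loss of generality that $S_1\subseteq H^{-}=\{\,x\mid c\cdot x\leq\alpha\,\}$.

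The first step is the half-space containment for $S_2$, which is purely formal: because $H^{-}$ is convex and contains $S_1$, it contains $\conv(S_1)=\conv(S_2)$, and therefore $S_2\subseteq\conv(S_2)\subseteq H^{-}$. Thus $S_2$ already lies in one of the two closed half-spaces determined by $H$, and it remains only to check that $H$ actually meets $S_2$.

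The second and main step is to verify $S_2\cap H\neq\varnothing$. I would pick a contact point $x_0\in S_1\cap H$, so that $c\cdot x_0=\alpha$. Since $x_0\in S_1\subseteq\conv(S_1)=\conv(S_2)$, it can be written as a convex combination of finitely many points $y_k\in S_2$, say $x_0=\sum_k\lambda_k y_k$ with $\lambda_k>0$ and $\sum_k\lambda_k=1$ (Carath\'eodory's theorem). Applying the functional $c$ gives $\alpha=c\cdot x_0=\sum_k\lambda_k(c\cdot y_k)$, while the containment from Step~1 yields $c\cdot y_k\leq\alpha$ for every $k$. A strict inequality $c\cdot y_k<\alpha$ for some index with $\lambda_k>0$ would pull the weighted average strictly below $\alpha$; hence $c\cdot y_k=\alpha$ for all $k$, and any such $y_k$ lies in $S_2\cap H$. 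Together with $S_2\subseteq H^{-}$, this shows that $H$ supports $S_2$, completing the one direction.

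The only delicate point, and the step I expect to require the most care, is precisely this intersection argument, since it is where the \emph{contact} of the hyperplane must be transferred from $S_1$ to $S_2$; the half-space containment in Step~1 is routine. The crux is the observation that a linear functional attaining its maximum on a convex combination must attain that maximum at every contributing point, which is exactly what converts a contact point of $S_1$ into a contact point of $S_2$. Everything else reduces to symmetric bookkeeping between the two sets.
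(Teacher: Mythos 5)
Your proposal is correct and follows essentially the same argument as the paper's proof: the half-space containment is transferred via the convex hull, and the contact point is transferred by writing it as a convex combination of points of $S_2$ and observing that a linear functional attaining its maximum at the average must attain it at every contributing point. The only cosmetic differences are the choice of $H^{-}$ versus $H^{+}$ and the explicit invocation of Carath\'eodory's theorem, neither of which changes the substance.
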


\begin{proof}
Let $H = \{\, x\in \mathbb{R}^n \mid c\cdot x = \alpha \,\}$ be a supporting hyperplane to $S_1$, and suppose that $S_1\subseteq H^+$. Since $H^+$ is a convex set and $\conv(S_1)$ is the smallest convex set containing $S_1$, we have $S_1\subseteq \conv(S_1)\subseteq H^+$. Thus $S_2\subseteq \conv(S_2)\subseteq H^+$.

Let $\bar{x}\in S_1\cap H$. This implies that $\bar{x}\in \conv(S_2)$ and $c\cdot \bar{x} = \alpha$.
So $\bar{x}$ can be written as a convex combination $\bar{x} = \sum_{t=1}^k \lambda_t y_t$ of some points $y_1, \dots, y_k\in S_2$
where $\lambda_1 + \dots + \lambda_k = 1$ and $\lambda_t> 0$ for all $t$, and then we get $\sum_{t=1}^k \lambda_t c\cdot y_t = \alpha$.
Since each $y_t\in S_2$ and $S_2\subseteq H^+$, the relation $c\cdot y_t\geq \alpha$ holds for each $t$.
It follows that $c\cdot y_t = \alpha$ for every $t$, and then $y_t\in S_2\cap H$ for every $t$.
Therefore, $H$ is also a supporting hyperplane to $S_2$.
\end{proof}

That the noncooperative and cooperative payoff regions of a finite strategic game have the same supporting hyperplanes is a direct result of Lemma~\ref{prop:20160421}.

\begin{thm}\label{prop:20160425}
Let $(N, (A_i)_{i\in N}, (u_i)_{i\in N})$ be a finite strategic game. Then a hyperplane $H$ in $\mathbb{R}^n$ is a supporting hyperplane to $S_{\nc}$ if and only if it is a supporting hyperplane to $S_{\co}$.
\end{thm}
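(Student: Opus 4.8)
The plan is to recognize this theorem as an immediate application of Lemma~\ref{prop:20160421}, specialized to the pair $S_1 = S_{\nc}$ and $S_2 = S_{\co}$. That lemma asserts that any two nonempty closed subsets of $\mathbb{R}^n$ sharing the same convex hull have the same supporting hyperplanes, so the entire argument reduces to verifying its two hypotheses for this particular pair and then invoking it.

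First I would check that both $S_{\nc}$ and $S_{\co}$ are nonempty and closed. Nonemptiness is clear, since the game has at least one pure-strategy profile and hence $S_{\pu}\neq \varnothing$. For closedness, Lemma~\ref{prop:20160420} shows that $S_{\nc}$ is compact, and therefore closed, while $S_{\co} = \conv(S_{\pu})$ is a convex polytope and thus closed as well.

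Next I would confirm the convex-hull hypothesis $\conv(S_{\nc}) = \conv(S_{\co})$. By~\eqref{eq:20170309} we already have $S_{\co} = \conv(S_{\nc})$. Since $S_{\co}$ is itself convex, applying the convex-hull operation once more leaves it unchanged, so $\conv(S_{\co}) = S_{\co} = \conv(S_{\nc})$, which is exactly the required equality.

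With both hypotheses in hand, Lemma~\ref{prop:20160421} yields at once that $S_{\nc}$ and $S_{\co}$ possess precisely the same supporting hyperplanes, which is the biconditional to be proved. I do not expect any genuine obstacle here; the proof is essentially a citation. The only point worth a moment's care is the idempotence step $\conv(\conv(S_{\nc})) = \conv(S_{\nc})$, which is what upgrades the one-sided identity $S_{\co} = \conv(S_{\nc})$ into the symmetric condition $\conv(S_{\nc}) = \conv(S_{\co})$ demanded by the lemma.
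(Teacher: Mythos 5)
Your proposal is correct and matches the paper exactly: the paper states this theorem as a direct consequence of Lemma~\ref{prop:20160421} applied to $S_{\nc}$ and $S_{\co}$, using the identity $S_{\co} = \conv(S_{\nc})$ from~\eqref{eq:20170309} together with the compactness of $S_{\nc}$ from Lemma~\ref{prop:20160420}. Your explicit verification of the hypotheses, including the idempotence step $\conv(\conv(S_{\nc})) = \conv(S_{\nc})$, is exactly the routine checking the paper leaves implicit.
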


An important property of convex sets is that every supporting hyperplane to a nonempty compact convex subset $K$ of $\mathbb{R}^n$ contains at least one extreme point of $K$ (see, e.g., \citet[p.~303]{jMoo:mmet1}). Hence, every supporting hyperplane to a convex polytope $S_{\co}$ must have an extreme point of $S_{\co}$. Likewise, although a noncooperative payoff region $S_{\nc}$ may not be convex, every supporting hyperplane to $S_{\nc}$ also contains at least one extreme point of $S_{\nc}$, which can be achieved by using a pure-strategy profile, in fact.

\begin{cor}\label{cor:20170314}
Let $(N, (A_i)_{i\in N}, (u_i)_{i\in N})$ be a finite strategic game. Then every supporting hyperplane to $S_{\nc}$ contains at least one extreme point of $S_{\nc}$.
\end{cor}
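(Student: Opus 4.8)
The plan is to combine Theorem~\ref{prop:20160425} with the classical fact quoted immediately above the corollary, namely that every supporting hyperplane to a nonempty compact convex set contains an extreme point of that set, and then to transfer the resulting extreme point of $S_{\co}$ over to $S_{\nc}$ using Theorem~\ref{prop:20151111}. Let $H$ be a supporting hyperplane to $S_{\nc}$. First I would invoke Theorem~\ref{prop:20160425} to conclude that $H$ is also a supporting hyperplane to the cooperative payoff region $S_{\co}$. Since $S_{\co} = \conv(S_{\pu})$ is a nonempty compact convex polytope in $\mathbb{R}^n$, the quoted result guarantees that $H$ contains at least one extreme point of $S_{\co}$; call it $v$, so that $v\in \ext(S_{\co})\cap H$.

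The second step is to observe that this $v$ is already an extreme point of the \emph{noncooperative} region. By Theorem~\ref{prop:20151111} we have the chain $\ext(S_{\co})\subseteq \ext(S_{\nc})\subseteq S_{\pu}$, so $v\in \ext(S_{\co})$ immediately yields $v\in \ext(S_{\nc})$. Since $v$ also lies on $H$, the point $v$ is an extreme point of $S_{\nc}$ contained in the supporting hyperplane $H$, which is exactly what the corollary asserts. As a bonus, the same chain gives $v\in S_{\pu}$, so this extreme point is in fact attainable by a pure-strategy profile, matching the remark made in the paragraph preceding the statement.

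There is essentially no obstacle here: the corollary is a clean assembly of three already-established facts, and the only thing to be careful about is that the classical extreme-point result applies to the \emph{convex} set $S_{\co}$ rather than directly to the possibly non-convex $S_{\nc}$. This is precisely why Theorem~\ref{prop:20160425} is needed as the bridge --- it lets us treat $H$ as a supporting hyperplane to the convex polytope $S_{\co}$, where the classical theorem is valid, and only afterwards do we pull the extreme point back to $S_{\nc}$ via the inclusion $\ext(S_{\co})\subseteq \ext(S_{\nc})$. No direct argument on the non-convex set $S_{\nc}$ is required, so the proof is short.
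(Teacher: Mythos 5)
Your proof is correct and follows exactly the same route as the paper's own argument: Theorem~\ref{prop:20160425} transfers the supporting hyperplane to $S_{\co}$, the classical fact yields an extreme point of $S_{\co}$ on $H$, and the inclusion $\ext(S_{\co})\subseteq \ext(S_{\nc})$ from Theorem~\ref{prop:20151111} pulls it back to $S_{\nc}$. Nothing is missing.
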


\begin{proof}
Let $H$ be a supporting hyperplane to $S_{\nc}$. Then $H$ is also a supporting hyperplane to $S_{\co}$ by Theorem~\ref{prop:20160425}, and thus it contains an extreme point of $S_{\co}$. Since the relation $\ext(S_{\co})\subseteq \ext(S_{\nc})$ holds by Theorem~\ref{prop:20151111}, we can conclude that the hyperplane $H$ must contain an extreme point of $S_{\nc}$.
\end{proof}


\section{Examples of Extreme Points of Payoff Regions}
\label{sec:examples}


We now give some examples of extreme points of payoff regions.
As shown below, these examples reveal that all ``convex corners'' of a non-convex set are extreme points,
and thus the definition of an extreme point of a non-convex set in this paper is intuitively reasonable.

It is notable that an extreme point of a payoff region is achieved as a payoff profile on which players, instead of selecting pure strategies, may choose mixed strategies. In this section, we further clarify the relations among extreme points, \emph{pure-payoff profiles}, and \emph{non-mixable payoff profiles}.
Here a pure-payoff profile refers to a payoff profile generated by choosing pure strategies;
a non-mixable payoff profile refers to a payoff profile that can be generated only by a pure-strategy profile.

\begin{example}
Consider the following two-player game.

\begin{table}[!h]
\centering
\renewcommand{\arraystretch}{1.2}
\begin{tabular}{r|c|c|}
\multicolumn{1}{c}{} & \multicolumn{1}{c}{$a_{21}$} & \multicolumn{1}{c}{$a_{22}$}\\ \cline{2-3}
      $a_{11}$ & $0$, $1$ & $5$, $2$ \\ \cline{2-3}
      $a_{12}$ & $2$, $5$ & $1$, $0$ \\ \cline{2-3}
\end{tabular}
\end{table}

\noindent
We can see that $\ext(S_{\co}) = \ext(S_{\nc}) = S_{\pu}$. The noncooperative payoff region of this game is shown in Figure~\ref{fig:1118}.

\begin{figure}[!htb]
  \centering
  \includegraphics[width=0.63\textwidth]{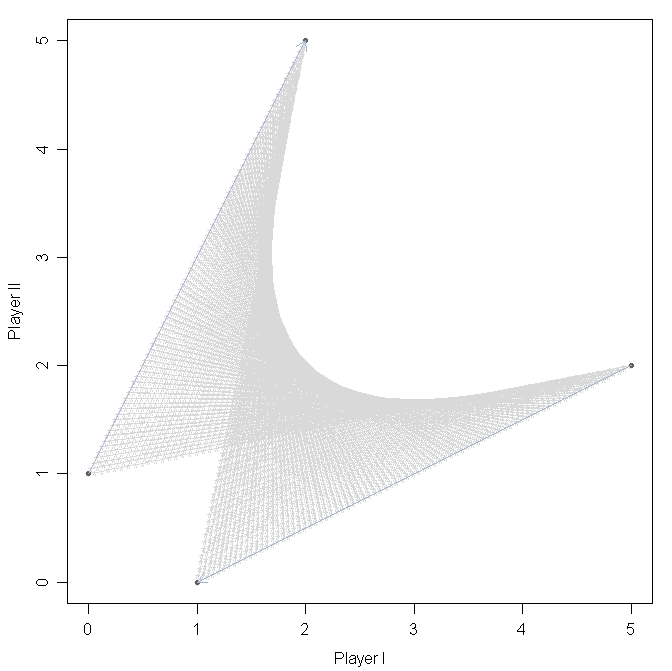}
  \caption{$\ext(S_{\co}) = \ext(S_{\nc}) = S_{\pu}$.}\label{fig:1118}
\end{figure}
\end{example}

\begin{example}\label{ex:20170310}
Consider the following two-player game.

\begin{table}[!h]
\centering
\renewcommand{\arraystretch}{1.2}
\begin{tabular}{r|c|c|c|}
\multicolumn{1}{c}{} & \multicolumn{1}{c}{$a_{21}$} & \multicolumn{1}{c}{$a_{22}$} & \multicolumn{1}{c}{$a_{23}$}\\ \cline{2-4}
      $a_{11}$ & $0$, $2$ & $0$, $1$ & $0$, $0$ \\ \cline{2-4}
      $a_{12}$ & $3$, $0$ & $3$, $2$ & $2$, $2$ \\ \cline{2-4}
\end{tabular}
\end{table}

\noindent
It is easy to see that
$\ext(S_{\co}) = \{ (0,0), (3,0), (3,2), (0,2) \}$,
$\ext(S_{\nc}) = \ext(S_{\co})\cup \{ (2,2) \}$, and $S_{\pu} = \ext(S_{\nc})\cup \{ (0,1) \}$. Hence $\ext(S_{\co})\varsubsetneq \ext(S_{\nc})\varsubsetneq S_{\pu}$. The noncooperative payoff region of this game is shown in Figure~\ref{fig:1116}.

\begin{figure}[!htb]
  \centering
  \includegraphics[width=0.63\textwidth]{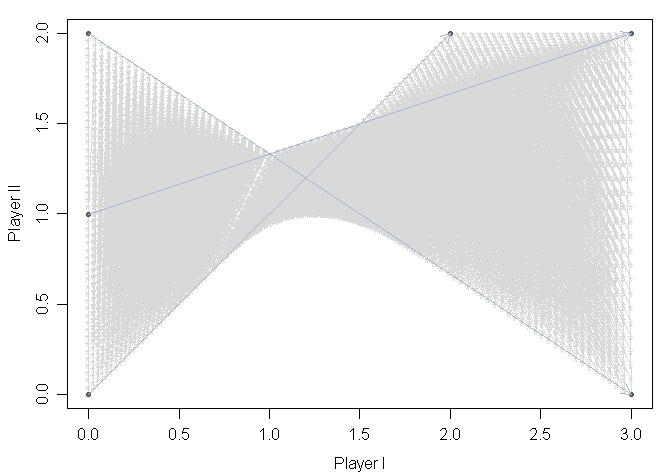}
  \caption{$\ext(S_{\co})\varsubsetneq \ext(S_{\nc})\varsubsetneq S_{\pu}$.}\label{fig:1116}
\end{figure}
\end{example}

For a finite strategic game $(N, (A_i)_{i\in N}, (u_i)_{i\in N})$, a payoff profile $v\in \mathbb{R}^n$ is called \emph{mixable} if it can be generated by a non-degenerate mixed-strategy profile, that is,
\[
v\in \left\{\, u(\sigma) \biggm| \sigma\in \prod_{i\in N}\Delta(A_i)\setminus \prod_{i\in N}A_i \,\right\}.
\]
A payoff profile is said to be non-mixable if it can be generated only by a pure-strategy profile, and it is certainly a pure-payoff profile. However, a pure-payoff profile is not necessarily non-mixable. For example, the pure-payoff pair $(0, 1)$ in Figure~\ref{fig:1116} is mixable. On the other hand, all the extreme points in Figures~\ref{fig:1118} and~\ref{fig:1116} are non-mixable. Nevertheless, as the following example shows, an extreme point of a noncooperative payoff region and a non-mixable payoff profile of a finite strategic game are not the same thing.

\begin{example}
Consider the following two-player game $G_1$.

\begin{table}[!h]
\centering
\renewcommand{\arraystretch}{1.2}
\begin{tabular}{r|c|c|}
\multicolumn{1}{c}{} & \multicolumn{1}{c}{$a_{21}$} & \multicolumn{1}{c}{$a_{22}$}\\ \cline{2-3}
      $a_{11}$ & $\phantom{-}0$, $\phantom{-}0$ & $0$, $\phantom{-}0$ \\ \cline{2-3}
      $a_{12}$ &           $-1$,           $-1$ & $1$,           $-1$ \\ \cline{2-3}
\end{tabular}
\end{table}

\noindent
Obviously, the payoff pair $(0,0)$ of $G_1$ is mixable, and it is an extreme point of the noncooperative payoff region of $G_1$.
\footnote{The noncooperative payoff region of $G_1$ is a triangle with vertices $(0,0)$, $(-1,-1)$, and $(1,-1)$.}
In addition, a non-mixable payoff profile may not be an extreme point of a noncooperative payoff region, as shown in the following two-player game $G_2$.

\begin{table}[!h]
\centering
\renewcommand{\arraystretch}{1.2}
\begin{tabular}{r|c|c|}
\multicolumn{1}{c}{} & \multicolumn{1}{c}{$a_{21}$} & \multicolumn{1}{c}{$a_{22}$}\\ \cline{2-3}
      $a_{11}$ &           $-1$, $0$ & $0$,           $-1$ \\ \cline{2-3}
      $a_{12}$ & $\phantom{-}0$, $0$ & $1$, $\phantom{-}0$ \\ \cline{2-3}
\end{tabular}
\end{table}

\noindent
It is easy to check that the payoff pair $(0,0)$ of $G_2$ is non-mixable, and it is not an extreme point of the noncooperative payoff region of $G_2$, which is illustrated in Figure~\ref{fig:0119}.

\begin{figure}[!htb]
  \centering
  \includegraphics[width=0.63\textwidth]{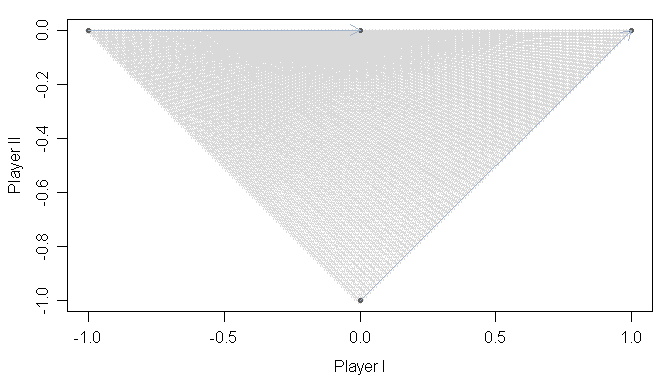}
  \caption{The non-mixable payoff pair $(0,0)$ is not an extreme point of $S_{\nc}$.}\label{fig:0119}
\end{figure}
\end{example}

\bigskip

In convex analysis, a basic result, called the supporting hyperplane theorem, states that if $x$ is a boundary point of a closed convex subset $C$ of $\mathbb{R}^n$, then there exists at least one supporting hyperplane to $C$ at $x$ (see, e.g., \citet[p.~51]{sBoy-lVan:co}). By applying Remark~\ref{rem:20160128}, we obtain that for a cooperative payoff region $S_{\co}$, there exists a supporting hyperplane passing through any given extreme point of $S_{\co}$. In contrast to Corollary~\ref{cor:20170314}, this need not be the case for a noncooperative payoff region as indicated in the example below.

\begin{example}
Consider the following two-player game.

\begin{table}[!h]
\centering
\renewcommand{\arraystretch}{1.2}
    \begin{tabular}{r|c|c|c|}
      \multicolumn{1}{c}{} & \multicolumn{1}{c}{$a_{21}$} & \multicolumn{1}{c}{$a_{22}$}
                           & \multicolumn{1}{c}{$a_{23}$}\\ \cline{2-4}
      $a_{11}$ & $4$, $4$ & $0$, $0$ & $0$, $0$ \\  \cline{2-4}
      $a_{12}$ & $0$, $0$ & $8$, $2$ & $0$, $0$ \\  \cline{2-4}
      $a_{13}$ & $0$, $0$ & $0$, $0$ & $2$, $8$ \\  \cline{2-4}
    \end{tabular}
\end{table}

\noindent
We can see that the payoff pair $(4,4)$ is an extreme point of $S_{\nc}$, and it is an interior point of $S_{\co}$. Hence, applying Theorem~\ref{prop:20160425}, there exists no supporting hyperplane to $S_{\nc}$ passing through the point $(4,4)$, which can be seen in Figure~\ref{fig:0128}.

\begin{figure}[!htb]
  \centering
  \includegraphics[width=0.63\textwidth]{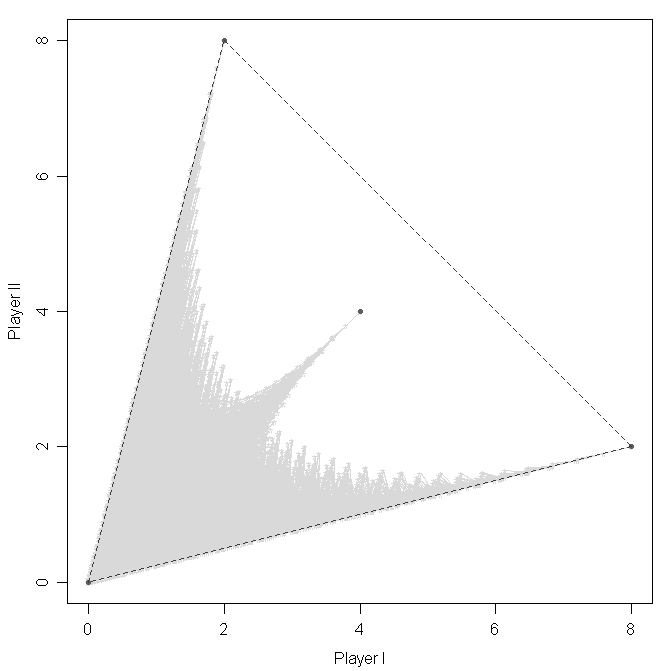}
  \caption{The extreme point $(4,4)$ of $S_{\nc}$ is an interior point of $S_{\co}$.}\label{fig:0128}
\end{figure}
\end{example}


\section{Applications}
\label{sec:App}


The study of extreme points of a non-convex set not only helps us to describe the shape of a noncooperative payoff region,
but enables us to efficiently prove some results about Pareto efficiency and social efficiency in the theory of games.

\begin{Def}
A point $w\in \mathbb{R}^n$ \emph{Pareto dominates} a point $v\in \mathbb{R}^n$ if $w\neq v$ and $w_i\geq v_i$ for $i = 1, \dots, n$. Let $S$ be a nonempty subset of $\mathbb{R}^n$. The \emph{Pareto frontier} of $S$ is defined as
\[
P(S) = \{\, v\in S\mid \{\, w\in S\mid \text{$w$ Pareto dominates $v$} \,\} = \varnothing \,\}.
\]
\end{Def}

In other words, a point $v$ on the Pareto frontier of $S$ means that $v\in S$ and there is no \emph{other} point $w\in S$ such that $w_i\geq v_i$ for all $i$.
For a finite strategic game, we will say more about a payoff pair on the Pareto frontier of the noncooperative payoff region in what follows.

\subsection*{On the Payoff Pairs with Horizontal/Vertical Tangents}

According to \citet{jMay:etg} and \citet{gVic-cCan:dess}, an evolutionarily stable strategy is defined as a strategy adopted by all individuals with the property that no mutant strategy can invade under the influence of natural selection if the population share of mutants is smaller than an invasion barrier.
The \emph{indirect evolutionary approach} is a branch of evolutionary game theory in which individuals are characterized by preferences rather than pre-programmed strategies.
It has received significant attention because, underlying this evolutionary approach with observable preferences,
efficiency is a necessary condition for outcomes to be stable (see, e.g., \citet{wGut-mYaa:erbssg}, \citet{lSam:iep}, and \citet{eDek-jEly-oYil:ep}).

When we check whether a Pareto-efficient strategy pair is stable in an indirect evolutionary two-population model, mutant strategy pairs may have the following property:
one mutant type in a mutant pair receives more than the incumbents in one population, and the other mutant type receives less than the incumbents in the other population.
Even though any such mutant pair would always be driven out if its population share is lower than its corresponding barrier,
the existence of a uniform invasion barrier is still ambiguous, especially as the strategies of the mutants and the incumbents get closer and closer to each other.

This argument prompted us to pay attention to the sequence of the mutant pairs in which the two fitness gaps with the incumbents in the two populations respectively will be closed in opposite directions. In fact, the situation that the corresponding invasion barrier could become arbitrarily small will imply that the trailing gap, caused by mutants trailing behind the incumbents in one population, actually closes faster than the leading gap, caused by leading mutants in the other population. But this contradicts the shape of a noncooperative payoff region. Thus no such mutant strategy pairs can exist.

For a more precise statement of this result, let $S_{\nc}\subset \mathbb{R}^2$ be a noncooperative payoff region of a two-player finite strategic game,
and let $(v_1^*, v_2^*)$ be a payoff pair in $S_{\nc}$. Suppose that $\{(v_1^t, v_2^t)\}$ is a sequence of payoff pairs in $S_{\nc}$ converging to $(v_1^*, v_2^*)$ with $v_1^t - v_1^*> 0$, defining a leading gap, and $v_2^* - v_2^t> 0$, defining a trailing gap, for each $t$. If the curve connecting the sequence has a horizontal tangent line at $(v_1^*, v_2^*)$, then the payoff pair $(v_1^*, v_2^*)$ will not lie on the Pareto frontier of $S_{\nc}$, as in Figure~\ref{fig:20161112}. The intuitive reason for this is clear:
if $(v_1^*, v_2^*)$ is Pareto efficient relative to $S_{\nc}$ and the sequence satisfying the above conditions exists,
then there would be a strictly convex subregion of $S_{\nc}$ containing this sequence, and it would contradict Remark~\ref{rem:20160430}.
We will give a rigorous proof below using the properties of noncooperative payoff regions discussed in Section~\ref{sec:PayReg}.

\begin{figure}[!htb]
  \centering
  \includegraphics[width=0.63\textwidth]{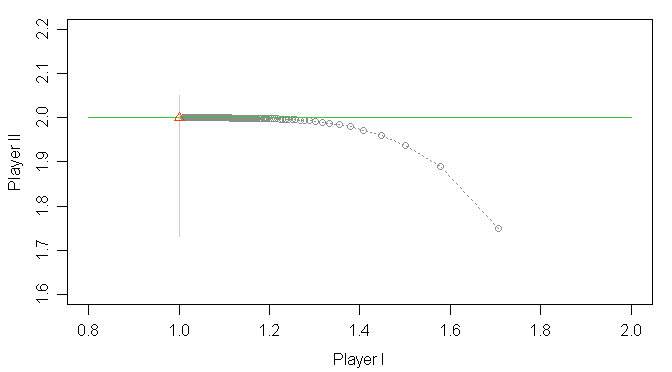}
  \caption{The payoff pair $(1,2)$ will not lie on the Pareto frontier of $S_{\nc}$.}\label{fig:20161112}
\end{figure}

The actual noncooperative payoff region is often too involved, even though there are just two players in a game.
Here is another way of determining whether a payoff pair does not lie on the Pareto frontier of a noncooperative payoff region.

\begin{thm}\label{prop:20161109}
Let $(\{1, 2\}, A_1, A_2, u_1, u_2)$ be a two-player finite strategic game. Suppose that $\{(v_1^t, v_2^t)\}$ is a sequence in $S_{\nc}$ which converges to $(v_1^*, v_2^*)\in S_{\nc}$ and satisfies $v_1^t> v_1^*$ and $v_2^*> v_2^t$ for all $t$. If the condition
\[
\lim_{t\rightarrow \infty}\frac{v_2^t - v_2^*}{v_1^t - v_1^*} = 0
\]
holds, then $(v_1^*, v_2^*)\notin P(S_{\nc})$.
\end{thm}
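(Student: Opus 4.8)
The plan is to argue by contradiction. After translating the payoff plane so that $(v_1^*, v_2^*)$ becomes the origin, I would assume $(0,0)\in P(S_{\nc})$ and read off what this forces: no point of $S_{\nc}$ other than the origin can have both coordinates nonnegative, so $S_{\nc}\setminus\{(0,0)\}$ lies in $\{x<0\}\cup\{y<0\}$. The hypotheses place the tail of $\{(v_1^t,v_2^t)\}$ in the open fourth quadrant $\{x>0,\,y<0\}$, and the condition $(v_2^t-v_2^*)/(v_1^t-v_1^*)\to 0$ says the sequence converges to the origin tangentially to the positive horizontal axis. The goal is to show that this configuration manufactures a strictly convex subregion of $S_{\nc}$ containing a relative neighborhood of the boundary point $(0,0)$, which is impossible by Remark~\ref{rem:20160430} (itself a consequence of the finiteness of $\ext(S_{\nc})$ coming from Theorem~\ref{prop:20151111}).

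The main tools are the segment structure of $S_{\nc}$ and the finiteness of $\ext(S_{\nc})$. Holding all but one player's strategy fixed and letting that player randomize produces genuine line segments lying inside $S_{\nc}$, whose endpoints can be driven to pure profiles exactly as in the proof of Theorem~\ref{prop:20151111}. When $(0,0)$ happens to lie on the boundary of $S_{\co}$, a supporting line to $S_{\nc}$ at $(0,0)$ is available by the supporting hyperplane theorem and Theorem~\ref{prop:20160425}; testing the sequence against it as $t\to\infty$ and combining with Pareto efficiency should expose a vertex of $S_{\co}$ — hence a point of $S_{\pu}\subseteq S_{\nc}$ by Theorem~\ref{prop:20151111} — that dominates $(0,0)$. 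The delicate situation, however, is the genuinely noncooperative one in which $(0,0)$ is interior to $S_{\co}$, exactly the phenomenon displayed by the extreme point $(4,4)$ in the last example of Section~\ref{sec:examples}: no separating line at $(0,0)$ exists, and the contradiction must be extracted from the curvature of $S_{\nc}$ rather than from a hyperplane.

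In that remaining case I would analyze the upper boundary arc of the fourth-quadrant piece of $S_{\nc}$ on which the sequence sits. Pareto efficiency pins the arc's endpoint at the origin and keeps $S_{\nc}$ weakly below it; the tangency condition forces the arc to leave the origin horizontally, while the fact that each $v_2^t<0$ forces the arc eventually to acquire a strictly negative slope. An arc that is horizontal at the origin yet bends downward, together with the interior randomization segments that lie inside $S_{\nc}$, bounds a convex subregion whose boundary near $(0,0)$ carries no line segment — that is, a strictly convex subregion abutting the boundary point $(0,0)$, contradicting Remark~\ref{rem:20160430}.

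The hard part is this last step: certifying strict convexity for a subset of the non-convex set $S_{\nc}$ using only one-sided tangential data. Concretely, I would rule out that the local boundary through $(0,0)$ is a straight segment. A horizontal or positively sloped segment in $S_{\nc}$ through $(0,0)$ immediately produces a point of $S_{\nc}$ dominating the origin, contradicting $(0,0)\in P(S_{\nc})$; a negatively sloped segment is excluded because, with $(v_2^t-v_2^*)/(v_1^t-v_1^*)\to 0$, the sequence eventually lies strictly above any fixed line of negative slope through $(0,0)$, so such a segment could not form the upper boundary on which the sequence accumulates. What remains delicate — and is the crux of the proof — is organizing the interior randomization segments into an actual strictly convex subregion of $S_{\nc}$ (not merely of its convex hull $S_{\co}$), so that Remark~\ref{rem:20160430} can be invoked cleanly.
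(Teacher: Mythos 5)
Your overall strategy --- contradiction via Remark~\ref{rem:20160430} --- is not the paper's route, and the step you yourself flag as ``the crux'' is a genuine gap rather than a technicality. To invoke Remark~\ref{rem:20160430} you must exhibit a \emph{strictly convex} subset $T\subseteq S_{\nc}$ containing a full relative neighborhood $U\cap S_{\nc}$ of some boundary point, and nothing in your sketch produces such a $T$: (a) strict convexity presupposes convexity, and the piece of $S_{\nc}$ near $(v_1^*,v_2^*)$ need not be convex, so ``the local boundary carries no line segment'' does not certify strict convexity of any subregion; (b) the region ``bounded by'' your upper boundary arc and the randomization segments is not shown to be \emph{contained in} $S_{\nc}$ --- it may include points outside the payoff region; and (c) even granting a nice arc, you have not shown that $T$ contains a relative neighborhood of the boundary point, which is exactly what the Remark requires. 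Your side remark for the case $(v_1^*,v_2^*)\in\partial S_{\co}$ is also unsupported: a supporting line there contains an extreme point of $S_{\co}$, but there is no reason that extreme point Pareto dominates $(v_1^*,v_2^*)$.

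The paper sidesteps all of this by using the finiteness of $\ext(S_{\nc})$ (Theorem~\ref{prop:20151111}) directly. Assuming $(v_1^*,v_2^*)\in P(S_{\nc})$, choose $v_1^k$ so that the strip $v_1^*<x_1<v_1^k$ contains no extreme point of $S_{\nc}$, set $\bar v_2^k=\max\{\beta:(v_1^k,\beta)\in S_{\nc}\}$, and maximize over $A=\{(x_1,x_2)\in S_{\nc}\mid v_1^*\le x_1\le v_1^k\}$ the linear functional $f(x_1,x_2)=(v_2^*-\bar v_2^k)x_1+(v_1^k-v_1^*)x_2$, whose level lines have negative slope. The tangency hypothesis $\lim_t (v_2^t-v_2^*)/(v_1^t-v_1^*)=0$ forces every maximizer of $f$ to lie strictly inside the strip, and a short argument shows $\ext(f^{-1}(m))\subseteq\ext(S_{\nc})$, so Lemma~\ref{prop:20160424} yields an extreme point of $S_{\nc}$ inside the strip --- contradicting its choice. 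If you want to rescue your write-up, replace the strict-convexity construction by this maximization argument: your curvature intuition is the right picture, but it has to be cashed out through extreme points of a compact maximizer set, not through Remark~\ref{rem:20160430}.
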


\begin{proof}
Suppose that $(v_1^*, v_2^*)\in P(S_{\nc})$. This means that for each $(x_1, x_2)\in S_{\nc}$, we have $v_2^*> x_2$ if $x_1> v_1^*$, and $v_2^*\geq x_2$ if $x_1 = v_1^*$. Since $\ext(S_{\nc})$ is a finite set by Theorem~\ref{prop:20151111}, we can pick $v_1^k$ such that
\begin{equation}\label{eq:0208}
\{\, (x_1, x_2)\in \ext(S_{\nc})\mid v_1^*< x_1< v_1^k \,\} = \varnothing.
\end{equation}
By the compactness of $S_{\nc}$ (see Lemma~\ref{prop:20160420}), we can choose $\bar{v}_2^k = \max\{\, \beta\in \mathbb{R}\mid (v_1^k, \beta)\in S_{\nc} \,\}$, and then $v_2^k\leq \bar{v}_2^k< \bar{v}_2^*$. Consider the compact set
\[
A = \{\, (x_1, x_2)\in S_{\nc} \mid v_1^*\leq x_1\leq v_1^k \,\},
\]
and define a linear function $f\colon A\to \mathbb{R}$ by
\[
f(x_1, x_2) = (v_2^* - \bar{v}_2^k)x_1 + (v_1^k - v_1^*)x_2.
\]
Under our assumptions, $f$ achieves its maximum on $A$, say $m$. Furthermore, it is true that $(x_1, x_2)\in f^{-1}(m)$ implies $v_1^*< x_1< v_1^k$, which is obvious from the assumption that $\lim_{t\rightarrow \infty}\frac{v_2^t - v_2^*}{v_1^t - v_1^*} = 0$, and from the fact that the level curves of $f$ are parallel straight line segments with slope $\frac{\bar{v}_2^k - v_2^*}{v_1^k - v_1^*}$.

We claim $\ext(f^{-1}(m))\subseteq \ext(S_{\nc})$. To see this, assume that $(\bar{x}_1, \bar{x}_2)\in f^{-1}(m)$ and $(\bar{x}_1, \bar{x}_2)\notin \ext(S_{\nc})$. Then $v_1^*< \bar{x}_1< v_1^k$, and we can choose distinct points $(y_1, y_2)$, $(z_1, z_2)\in A$ such that
\[
(\bar{x}_1, \bar{x}_2)\in \{\, \theta (y_1, y_2) + (1-\theta)(z_1, z_2) \mid \theta\in (0, 1) \,\}\subset A.
\]
The fact that $(\bar{x}_1, \bar{x}_2)\in f^{-1}(m)$ implies that $(y_1, y_2)$, $(z_1, z_2)\in f^{-1}(m)$, and thus $\theta (y_1, y_2) + (1-\theta)(z_1, z_2)\in f^{-1}(m)$ for every $\theta\in (0, 1)$. This means that $(\bar{x}_1, \bar{x}_2)\notin \ext(f^{-1}(m))$. Moreover, since $f^{-1}(m)$ is compact, the set $\ext(f^{-1}(m))$ is nonempty by Lemma~\ref{prop:20160424}. Therefore we get the relations
\[
\varnothing\neq \ext(f^{-1}(m))\subseteq \ext(S_{\nc})\cap \{\, (x_1, x_2)\in S_{\nc}\mid v_1^*< x_1< v_1^k \,\},
\]
contradicting~(\ref{eq:0208}).
\end{proof}

\begin{Rem}
By mirror symmetry, there should be a result corresponding to Theorem~\ref{prop:20161109}.
That is,
if a sequence $\{(v_1^t, v_2^t)\}$ in $S_{\nc}$ converges to $(v_1^*, v_2^*)\in S_{\nc}$ satisfying $v_1^t< v_1^*$ and $v_2^*< v_2^t$ for all $t$,
then the limit point $(v_1^*, v_2^*)$ does not lie on the Pareto frontier of $S_{\nc}$,
provided that the curve connecting the sequence has a vertical tangent line at $(v_1^*, v_2^*)$.
\end{Rem}

\subsection*{On the Rational Payoff Regions}

In some environments, like infinitely repeated games, the coefficients used in a convex combination of pure-payoff profiles are sometimes restricted to rational numbers. For an $n$-player finite strategic game, we denote by $\rcvx{S}_{\co}$ the set of all convex combinations with \emph{rational} coefficients of the pure-payoff profiles. Similarly, we denote by $\rcvx{S}_{\nc}$ the set of all payoff profiles achievable with \emph{rational} mixed-strategy profiles.
\footnote{A mixed strategy $\sigma_i$ for player~$i$ is \emph{rational} if all values of $\sigma_i$ are rational numbers. A mixed-strategy profile $\sigma$ is \emph{rational} if all its mixed strategies are rational.}
Then $\rcvx{S}_{\co}$ is dense in $S_{\co}$, and $\rcvx{S}_{\nc}$ is dense in $S_{\nc}$.

If $\sigma$ is a rational mixed-strategy profile and $u(\sigma)\in P(S_{\nc})$, then it is clear that $u(\sigma)\in P(\rcvx{S}_{\nc})$.
Conversely, if the payoff profile $u(\sigma)$ lies on the Pareto frontier of $\rcvx{S}_{\nc}$, which is dense in $S_{\nc}$,
do we get the result that $u(\sigma)\in P(S_{\nc})$?
In a two-player finite strategic game, the answer is yes and it also holds for a cooperative payoff region, as the following theorem shows.

\begin{thm}
Let $(\{1, 2\}, A_1, A_2, u_1, u_2)$ be a two-player finite strategic game. Then $P(\rcvx{S}_{\nc})\subseteq P(S_{\nc})$ and $P(\rcvx{S}_{\co})\subseteq P(S_{\co})$.
\end{thm}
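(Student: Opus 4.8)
The plan is to prove both inclusions by a single argument, writing $(S,\rcvx{S})$ for either $(S_{\nc},\rcvx{S}_{\nc})$ or $(S_{\co},\rcvx{S}_{\co})$. I would rely on four ingredients: $\rcvx{S}$ is dense in $S$ (stated just before the theorem); $S$ is compact (Lemma~\ref{prop:20160420} for $S_{\nc}$, a polytope for $S_{\co}$); every pure-payoff profile lies in $\rcvx{S}$ (a pure profile is rational, and a vertex is the trivial rational convex combination); and, crucially, every extreme point of $S$ is a pure-payoff profile, which is $\ext(S_{\nc})\subseteq S_{\pu}$ from Theorem~\ref{prop:20151111} and $\ext(S_{\co})\subseteq S_{\pu}$ from the inclusion recorded earlier. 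Fix $v\in P(\rcvx{S})$ and suppose, for contradiction, that $v\notin P(S)$; then some $w\in S$ with $w\neq v$ Pareto dominates $v$.

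First I would dispose of the \emph{strict} case, in which $S$ contains a point $p$ with $p_1>v_1$ and $p_2>v_2$. By density, pick $p'\in\rcvx{S}$ so close to $p$ that still $p_1'>v_1$ and $p_2'>v_2$; then $p'\in\rcvx{S}$ Pareto dominates $v$, contradicting $v\in P(\rcvx{S})$. So ordinary density handles strict domination at once.

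The remaining case, where no point of $S$ strictly dominates $v$ in both coordinates, is the main obstacle, because rational approximants of the dominator may slip just below the level $x_2=v_2$ and thereby fail to dominate. Here $w$ must agree with $v$ in one coordinate; by symmetry take $w_2=v_2$ and $w_1>v_1$. I would set $a=\max\{x_1 : (x_1,x_2)\in S,\ x_2\geq v_2\}$, which exists by compactness and satisfies $a\geq w_1>v_1$; since the maximizer has $x_1=a>v_1$, the absence of strictly dominating points forces its second coordinate to equal $v_2$, so $(a,v_2)\in S$. The key step is to show $(a,v_2)\in\ext(S)$. If instead $(a,v_2)=\theta y+(1-\theta)z$ were interior to a segment contained in $S$ with $y\neq z$ and $\theta\in(0,1)$, then—using that every point of $S$ with $x_2\geq v_2$ has $x_1\leq a$, together with the coordinate identities—one of the endpoints, say $y$, must satisfy $y_2>v_2$; if already $y_1>v_1$ this $y$ strictly dominates $v$, and otherwise $y_1\leq v_1$ and a point of the open segment just on the $y$-side of $(a,v_2)$ has second coordinate above $v_2$ and first coordinate still above $v_1$, hence strictly dominates $v$. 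Either way we contradict the assumption of this case, so $(a,v_2)$ is an extreme point of $S$.

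It follows that $(a,v_2)$ is a pure-payoff profile, hence $(a,v_2)\in\rcvx{S}$, and it Pareto dominates $v$ because $a>v_1$ while the second coordinate equals $v_2$; this contradicts $v\in P(\rcvx{S})$ and completes both inclusions. The decisive point is that Theorem~\ref{prop:20151111} turns the extremal (weak) dominator $(a,v_2)$ into a \emph{rational}, pure-strategy payoff—something density alone cannot supply—so the hard part of the proof is exactly the reduction of weak domination to an extreme-point statement. For $S_{\co}$ the same argument applies verbatim and is in fact simpler, since convexity puts the open segment automatically inside $S_{\co}$ and its extreme points are the vertices of the polytope, each a pure-payoff profile.
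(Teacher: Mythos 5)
Your proof is correct and follows essentially the same route as the paper: dispose of strict domination by density of the rational region, and in the weak case pass to the maximal weak dominator on the level $x_2=v_2$ and use Theorem~\ref{prop:20151111} (extreme points of $S_{\nc}$ are pure-payoff profiles, hence rational) to contradict $v\in P(\rcvx{S})$. The only difference is organizational: by assuming at the outset that no strict dominator exists, you force the maximal point to be extreme, whereas the paper treats the non-extreme subcase separately by extracting a strict dominator from the line segment and falling back on density.
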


\begin{proof}
The proofs of these two cases are similar; we only show that $P(\rcvx{S}_{\nc})\subseteq P(S_{\nc})$. Let $v\in \rcvx{S}_{\nc}$ and $v\notin P(S_{\nc})$. Then there exists $w\in S_{\nc}$ such that $w$ Pareto dominates $v$. If $w_1> v_1$ and $w_2> v_2$, then since $\rcvx{S}_{\nc}$ is dense in $S_{\nc}$, we can choose $\bar{x}\in \rcvx{S}_{\nc}$ such that $\bar{x}_1> v_1$ and $\bar{x}_2> v_2$, which means that $v\notin P(\rcvx{S}_{\nc})$.

Otherwise, without loss of generality, we assume that $w_1> v_1$ and $w_2 = v_2$. Since $S_{\nc}$ is a compact set (see Lemma~\ref{prop:20160420}), we can take
\[
\bar{w}_1 = \max \{\, \alpha\in \mathbb{R} \mid (\alpha, v_2)\in S_{\nc} \,\},
\]
and then we have $\bar{w}_1\geq w_1> v_1$. If $(\bar{w}_1, v_2)\in \ext(S_{\nc})$, then, by Theorem~\ref{prop:20151111}, the point $(\bar{w}_1, v_2)$ is a pure-payoff pair, and hence $(\bar{w}_1, v_2)\in \rcvx{S}_{\nc}$. Therefore $v\notin P(\rcvx{S}_{\nc})$.

If $(\bar{w}_1, v_2)\notin \ext(S_{\nc})$, we can choose two distinct points $u$, $u'\in S_{\nc}$ such that
\[
(\bar{w}_1, v_2)\in \{\, \theta u + (1-\theta)u' \mid \theta\in (0, 1) \,\}\subset S_{\nc}.
\]
The way in which $\bar{w}_1$ is chosen leads to $u_2\neq v_2$ and $u'_2\neq v_2$. Without loss of generality, assume that $u_2> v_2$. Since $\bar{w}_1> v_1$, we can always choose $\lambda\in (0, 1)$ such that the point $(x_1, x_2) = \lambda(\bar{w}_1, v_2) + (1 - \lambda)u$ belongs to $S_{\nc}$, and the conditions $x_1> v_1$ and $x_2> v_2$ are satisfied. Again, since $\rcvx{S}_{\nc}$ is dense in $S_{\nc}$, there exists $\bar{x}\in \rcvx{S}_{\nc}$ such that $\bar{x}$ strictly Pareto dominates $v$, that is, $v\notin P(\rcvx{S}_{\nc})$.
\end{proof}

The relations in the above theorem seem intuitively clear, because for every payoff profile, it is either achieved using rational coefficients or arbitrarily close to one with rational coefficients. But the above results cannot be extended to general $n$-player games. Here we provide a counterexample for $n = 3$.

The main idea for the counterexample is caused by the above proof.
Note that in the last case,
we can always choose a point on the line segment containing the non-extreme point and contained in the payoff region such that the rational payoff pair is strictly Pareto dominated. However, this may not be done if the payoff region is generated from a multi-player game rather than from a two-player game.

\begin{example}
Consider the following three-player game, where the payoff function is denoted by $u$. Figure~\ref{fig:1124} shows the noncooperative payoff region $S_{\nc}$ of this game.

\begin{table}[!h]
\centering
\renewcommand{\arraystretch}{1.2}
    \begin{tabular}{r|c|c|}
      \multicolumn{1}{c}{} & \multicolumn{1}{c}{$a_{21}$} & \multicolumn{1}{c}{$a_{22}$} \\ \cline{2-3}
      $a_{11}$ & $\sqrt{2}$, $-1$, $\phantom{-}1$ & $\phantom{-}0$, $\phantom{\sqrt{}}0$, $-1$ \\ \cline{2-3}
      $a_{12}$ & $\phantom{\sqrt{}}0$, $\phantom{-}0$, $-1$ & $-2$, $\sqrt{2}$, $\phantom{-}1$ \\ \cline{2-3}
      \multicolumn{1}{c}{} & \multicolumn{2}{c}{$a_{31}$}
    \end{tabular} \qquad
    \begin{tabular}{r|c|c|}
      \multicolumn{1}{c}{} & \multicolumn{1}{c}{$a_{21}$} & \multicolumn{1}{c}{$a_{22}$}\\ \cline{2-3}
      $a_{11}$ & $\sqrt{2}$, $-1$, $\phantom{-}1$ & $\phantom{-}0$, $\phantom{\sqrt{}}0$, $-1$ \\ \cline{2-3}
      $a_{12}$ & $\phantom{\sqrt{}}0$, $\phantom{-}0$, $-1$ & $-2$, $\sqrt{2}$, $\phantom{-}1$ \\ \cline{2-3}
      \multicolumn{1}{c}{} & \multicolumn{2}{c}{$a_{32}$}
    \end{tabular}
\end{table}

\noindent
The strategy $a_{11}$ is strictly dominant for player~$1$; the strategy $a_{22}$ is strictly dominant for player~$2$; the actions of player~$3$ have no effect on their payoffs. Let $\sigma^*$ be a Nash equilibrium, in which the players choose their strategies independently. Then $\sigma^*_1 = a_{11}$, $\sigma^*_2 = a_{22}$, and $\sigma^*_3$ is one of the possible strategies of the player~$3$. Therefore, $u(\sigma^*) = (0,0,-1)$.
We claim that $u(\sigma^*)\in P(\rcvx{S}_{\nc})$.
Nevertheless, $u(\sigma^*)\notin P(S_{\nc})$ and $u(\sigma^*)\notin P(S_{\co})$, as shown in Figure~\ref{fig:1124}.

To verify the claim, we show that $u(\sigma^*)\in P(\rcvx{S}_{\co})$;
thus, since $u(\sigma^*)\in \rcvx{S}_{\nc}$ and $\rcvx{S}_{\nc}\subseteq \rcvx{S}_{\co}$, we have $u(\sigma^*)\in P(\rcvx{S}_{\nc})$.
Let $\varphi$ be a correlated strategy with the property that $u(\varphi)$ Pareto dominates $u(\sigma^*)$. This implies that
\[
\varphi(a_{11}, a_{21}, a_{31}) + \varphi(a_{11}, a_{21}, a_{32}) = \sqrt{2}[\varphi(a_{12}, a_{22}, a_{31}) + \varphi(a_{12}, a_{22}, a_{32})],
\]
and $\varphi(a_{1j}, a_{2j}, a_{31}) + \varphi(a_{1j}, a_{2j}, a_{32})\neq 0$ for $j = 1, 2$.
Therefore, $u(\varphi)\notin \rcvx{S}_{\co}$, and so we can conclude that $u(\sigma^*)\in P(\rcvx{S}_{\co})$.

\begin{figure}[!htb]
  \centering
  \includegraphics[width=0.73\textwidth]{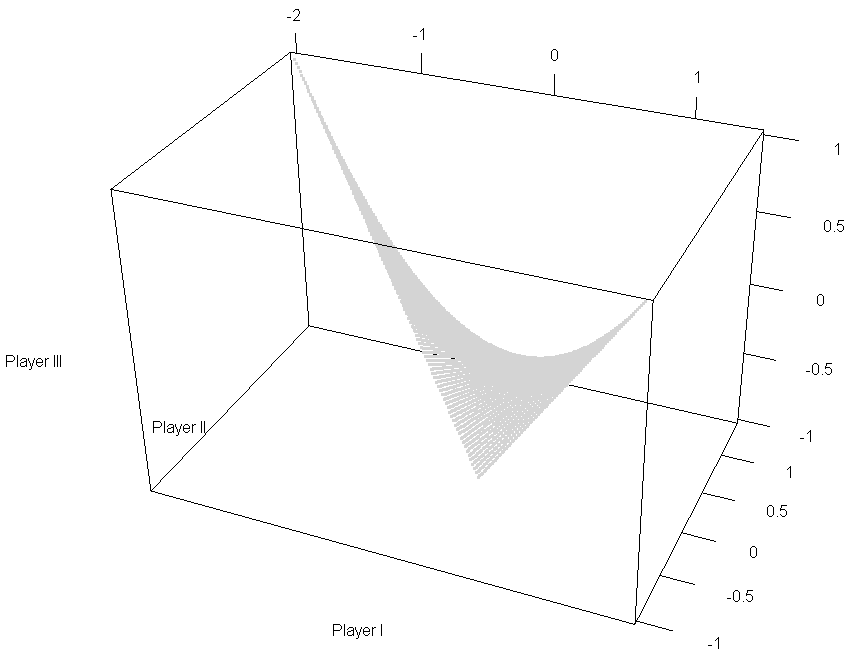}
  \caption{$P(\rcvx{S}_{\co})\nsubseteq P(S_{\co})$; $P(\rcvx{S}_{\nc})\nsubseteq P(S_{\nc})$.}\label{fig:1124}
\end{figure}
\end{example}

\subsection*{On the Social Efficiency}
Pareto efficiency only guarantees that no one can become better off without making someone else worse off. A situation would still be Pareto efficient despite huge disparities among individuals. To reduce inequality in the distribution of utilities, the goal of a decision maker is to select the point on the Pareto frontier that gives the highest social welfare.
However, because there are multiple conceptions of inequality, there is no simple criterion for choosing among those diverse social welfare functions.

Here, for a finite strategic game $(N, (A_i)_{i\in N}, (u_i)_{i\in N})$, we consider a \emph{weighted utilitarian social welfare function}
$W\colon S_{\nc}\to \mathbb{R}$ defined by
\[
W(v) = \sum_{i\in N}\alpha_i v_i,
\]
which is one of the most widely used methods for aggregation of individual utilities. It can be shown that the maximum value of the weighted utilitarian social welfare function can be achieved under a pure-strategy profile, although the shape of a noncooperative payoff region is unclear. Before explaining this, let us first review the importance of extreme points in convex analysis.

Extreme points play a crucial role in solving convex optimization problems as described below.
A continuous convex function on a compact convex subset of $\mathbb{R}^n$ will always attain its maximum at an extreme point of the subset (see, e.g., \citet[p.~298]{cAli-kBor:ida}).
This is known as the Bauer Maximum Principle. It should be emphasized that this result does not mean that all maximizers are extreme points.

Now let the social welfare function $W$ be extended to the domain $S_{\co}$, a compact convex subset of $\mathbb{R}^n$.
Then there exists $v_0\in \ext(S_{\co})$ such that $W(v_0) = \max \{\, W(v)\mid v\in S_{\co} \,\}$.
By Theorem~\ref{prop:20151111}, we know that $\ext(S_{\co})\subseteq \ext(S_{\nc})\subseteq S_{\pu}$.
Therefore, the relations $S_{\pu}\subseteq S_{\nc}\subseteq S_{\co}$ imply that
\[
W(v_0) = \max \{\, W(v)\mid v\in S_{\nc} \,\} = \max \{\, W(v)\mid v\in S_{\pu} \,\}.
\]
Thus, for the weighted utilitarian social welfare function defined on a noncooperative payoff region, its maximum value can be obtained by means of a pure-strategy profile, no matter what the shape of this payoff region is.


\section{Conclusion}


For every $n$-player finite strategic game, the cooperative payoff region is a convex polytope in $\mathbb{R}^n$, and the noncooperative payoff region is a compact connected subset of this polytope. A corner of the cooperative payoff region is a convex corner of the noncooperative payoff region; but not vice versa. This indicates that the two payoff regions have the same supporting hyperplanes.

In addition, although the boundary of a payoff region may have a very complex structure when the players choose their strategies independently, there is one remarkable characteristic common to all noncooperative payoff regions: any subregion must be non-strictly convex if it contains a relative neighborhood of a boundary point of the noncooperative payoff region.

In this paper, we study them in a strict mathematical way. Besides, the properties of extreme points of a noncooperative payoff region also allow us to efficiently prove some new properties, such as the theorems in Section~\ref{sec:App}. This approach not only provides rigorous proofs for obvious assertions, but also helps to clarify some questions to which the intuition gives no good answers. These fully demonstrate that it can be an effective and efficient approach for research purposes.


\bibliographystyle{plainnat}
\bibliography{PayReg}



\end{document}